\definecolor{pku-red}{RGB}{139,0,18}
\theoremstyle{plain}
\newtheorem{theorem}{Theorem}[section]
\theoremstyle{definition}
\newtheorem{definition}[theorem]{Definition}
\theoremstyle{remark}
\newcommand{\EE}{\mathbb{E}}
\newcommand{\RR}{\mathbb{R}}
\newcommand{\Aa}{\mathcal{A}}
\newcommand{\Mm}{\mathcal{M}}
\newcommand{\bmb}{{\bm{b}}}
\newcommand{\bmv}{{\bm{v}}}
\newcommand{\bmw}{{\bm{w}}}
\newcommand{\bmx}{{\bm{x}}}
\newcommand{\bmy}{{\bm{y}}}
\newcommand{\rev}{{\mathrm{Rev}}}
\newcommand{\name}{{AMenuNet}}
\title{A Scalable Neural Network for DSIC Affine Maximizer Auction Design}
\author{
	\textbf{Zhijian Duan}$^{1}$,
	\textbf{Haoran Sun}$^{1}$,
	\textbf{Yurong Chen}$^{1}$,
	\textbf{Xiaotie Deng}$^{1}$ 
	\\
	$^{1}$Peking University\\
	\texttt{zjduan@pku.edu.cn},
	\texttt{sunhaoran0301@stu.pku.edu.cn},\\
	\texttt{\{chenyurong,xiaotie\}@pku.edu.cn}
}
\date{}
\begin{document}
	\maketitle
	
	\begin{abstract}
Automated auction design aims to find empirically high-revenue mechanisms through machine learning. Existing works on multi item auction scenarios can be roughly divided into RegretNet-like and affine maximizer auctions (AMAs) approaches. However, the former cannot strictly ensure dominant strategy incentive compatibility (DSIC), while the latter faces scalability issue due to the large number of allocation candidates. To address these limitations, we propose AMenuNet, a scalable neural network that constructs the AMA parameters (even including the allocation menu) from bidder and item representations. AMenuNet is always DSIC and individually rational (IR) due to the properties of AMAs, and it enhances scalability by generating candidate allocations through a neural network. Additionally, AMenuNet is permutation equivariant, and its number of parameters is independent of auction scale. We conduct extensive experiments to demonstrate that AMenuNet outperforms strong baselines in both contextual and non-contextual multi-item auctions, scales well to larger auctions, generalizes well to different settings, and identifies useful deterministic allocations. Overall, our proposed approach offers an effective solution to automated DSIC auction design, with improved scalability and strong revenue performance in various settings.
\end{abstract}
	
	\section{Introduction}
One of the central topics in auction design is to construct a mechanism that is both dominant strategy incentive compatible (DSIC) and individually rational (IR) while bringing high expected revenue to the auctioneer.
The seminal work by \citet{myerson1981optimal} characterizes the revenue-maximizing mechanism for single-parameter auctions.
However, after four decades, the optimal auction design problem in multi-parameter scenarios remains incompletely understood, even in simple settings such as two bidders and two items~\citep{dutting2019optimal}.
To solve the problem, recently, there has been significant progress in \emph{automated auction design}~\citep{sandholm2015automated,dutting2019optimal}.
Such a paradigm formulates auction design as an optimization problem subject to DSIC and IR constraints and then finds optimal or near-optimal solutions using machine learning.

The works of automated auction design can be roughly divided into two categories.
The first category is the RegretNet-like approach~\citep{curry2020certifying,peri2021preferencenet,rahme2021auction,rahme2021permutation,duan2022context,ivanov2022optimal}, pioneered by RegretNet~\citep{dutting2019optimal}.
These works represent the auction mechanism as neural networks and then find near-optimal and approximate DSIC solutions using adversarial training. 
The second category is based on affine maximizer auctions (AMAs)~\citep{AMA,likhodedov2004methods,likhodedov2005approximating,VVCA,guo2017optimizing,curry2022differentiable}.
These methods restrict the auction mechanism to AMAs, which are inherently DSIC and IR.
Afterward, they optimize AMA parameters using machine learning to achieve high revenue.

Generally, RegretNet-like approaches can achieve higher revenue than AMA-based approaches.
However, these works are not DSIC.
They can only ensure approximate DSIC by adding a regret term in the loss function as a penalty for violating the DSIC constraints.
There are no theoretical results on the regret upper bound, and the impact of such regret on the behaviors of strategic bidders.
Even worse, computing the regret term is time-consuming~\citep{rahme2021auction}.
AMA-based approaches, on the other hand, offer the advantage of guaranteeing DSIC and IR due to the properties of AMAs.
However, many of these approaches face scalability issues because they consider all deterministic allocations as the allocation menu.
The size of the menu grows exponentially, reaching $(n+1)^m$ for $n$ bidders and $m$ items, making it difficult for these approaches to handle larger auctions.
Even auctions with $3$ bidders and $10$ items can pose challenges to the AMA-based methods.

To overcome the limitations above, we propose a scalable neural network for the DSIC affine maximizer auction design.
We refer to our approach as \name: Affine maximizer auctions with Menu Network. 
\name\ constructs the AMA parameters, including the allocation menu, bidder weights, and boost variables, from the bidder and item representations.
After getting the parameters, we compute the allocation and payment results according to AMA.
By setting the representations as the corresponding contexts or IDs, \name\ can handle both contextual~\citep{duan2022context} and non-contextual classic auctions.
As \name\ only relies on the representations, the resulting mechanism is guaranteed to be DSIC and IR due to the properties of AMAs.

Specifically, we employ two techniques to address the scalability issue of AMA-based approaches:
(1) Firstly, we parameterize the allocation menu. 
We predefine the size of the allocation menu and train the neural network to compute the allocation candidates within the menu, along with the bidder weights and boost variables. 
This allows for more efficient handling of large-scale auctions.
(2) Secondly, we utilize a transformer-based permutation-equivariant architecture. 
Notably, this architecture's parameters remain independent of the number of bidders or items. 
This enhances the scalability of \name, enabling it to handle auctions of larger scales than those in training.

We conduct extensive experiments to demonstrate the effectiveness of \name.
First, our performance experiments show that in both contextual and classic multi-item auctions, \name\ can achieve higher revenue than strong DSIC and IR baselines.
\name\ can also achieve comparable revenue to RegretNet-like approaches that can only ensure approximate DSIC. 
Next, our ablation study shows that the learnable allocation menu provides significant benefits to \name, from both revenue and scalability perspectives.
Thirdly, we find that \name\ can also generalize well to auctions with a different number of bidders or items than those in the training data.
And finally, the case study of winning allocations illustrates that \name\ can discover useful deterministic allocations and set the reserve price for the auctioneer.

\section{Related Work}
Automated mechanism design\citep{conitzer2002complexity,conitzer2004self,sandholm2015automated} has been proposed to find approximate optimal auctions with multiple items and bidders~\citep{balcan2008reducing,lahaie2011kernel,dutting2015payment}.
Meanwhile, several works have analyzed the sample complexity of optimal auction design problems~\citep{cole2014sample, devanur2016sample,balcan2016sample,guo2019settling,gonczarowski2021sample}.
Recently, pioneered by RegretNet~\citep{dutting2019optimal}, there is rapid progress on finding (approximate) optimal auction through deep learning~\citep{curry2020certifying,peri2021preferencenet,rahme2021auction,rahme2021permutation,duan2022context,ivanov2022optimal}.
However, as we mentioned before, those RegretNet-like approaches can only ensure approximate DSIC by adding a hard-to-compute regret term.

Our paper follows the affine maximizer auction (AMA)~\citep{AMA} based approaches. 
AMA is a weighted version of VCG~\citep{vickrey1961counterspeculation}, which assigns weights $\bmw$ to each bidder and assigns boosts  to each feasible allocation.
Tuning the weights and boosts enables AMAs to achieve higher revenue than VCG while maintaining DSIC and IR. 
Different subsets of AMA have been studied in various works, such as VVCA~\citep{likhodedov2004methods, likhodedov2005approximating, sandholm2015automated}, $\lambda$-auction~\citep{jehiel2007mixed}, mixed bundling auction~\citep{tang2012mixed}, and bundling boosted auction~\citep{balcan2021learning}.
However, these works set all the deterministic allocations as candidates, the size of which grows exponentially with respect to the auction scale. 
To overcome such issue, we construct a neural network that automatically computes the allocation menu from the representations of bidders and items. 
\cite{curry2022differentiable} is the closest work to our approach that also parameterizes the allocation menu. 
The key difference is that they optimize the AMA parameters explicitly, while we derive the AMA parameters by a neural network and optimize the network weights instead. 
By utilizing a neural network, we can handle contextual auctions by incorporating representations as inputs. 
Additionally, the trained model can generalize to auctions of different scales than those encountered during training.

Contextual auctions are a more general and realistic auction format assuming that every bidder and item has some public information. 
They have been widely used in industry~\citep{zhang2021optimizing,liu2021neural}. 
In the academic community, previous works on contextual auctions mostly focus on the online setting of some well-known contextual repeated auctions, such as posted-price auctions~\citep{amin2014repeated,mao2018contextual,drutsa2020optimal,zhiyanov2020bisection}, where the seller prices the item to sell to a strategic buyer, or repeated second-price auctions~\citep{golrezaei2021dynamic}. 
In contrast, our paper focuses on the offline setting of contextual sealed-bid auctions, similar to \citet{duan2022context}, a RegretNet-like approach.

\section{Preliminary}\label{sec:preli}
 
\paragraph{Sealed-Bid Auction.} We consider a sealed-bid auction with $n$ bidders and $m$ items. Denote $[n] = \{1, 2, \dots, n\}$. 
Each bidder $i \in [n]$ is represented by a $d_x$-dimensional vector $\bmx_i \in \RR^{d_x}$, which can encode her unique ID or her context (public feature)~\citep{duan2022context}. Similarly, each item $j \in [m]$ is represented by a $d_y$-dimensional vector $\bmy_j \in \RR^{d_y}$, which can also encode its unique ID or context. By using such representations, we unify the contextual auction and the classical Bayesian auction. We denote by $X = [\bmx_1, \bmx_2, \dots, \bmx_n]^T \in \RR^{n \times d_x}$ and $Y = [\bmy_1, \bmy_2, \dots, \bmy_m]^T \in \RR^{m\times d_y}$ the matrices of bidder and item representations, respectively. These matrices follow underlying joint probability distribution $F_{X,Y}$.
In an additive valuation 
setting, each bidder $i$ values each bundle of items $S \subseteq [m]$ with a valuation $v_{i,S} = \sum_{j\in S}v_{ij}$.
The bidder has to submit her bids for each item $j\in[m]$ as $\bmb_i \coloneqq (b_1, b_2, \dots, b_m)$. 
The valuation profile $V = (v_{ij})_{i\in[n], j\in[m]} \in \RR^{n\times m}$ is generated from a conditional distribution $F_{V|X,Y}$ that depends on the representations of bidders and items. 
The auctioneer does not know the true valuation profile $V$ but can observe the public bidder representations $X$, item representations $Y$, and the bidding profile $B = (b_{ij})_{i\in[n], j\in[m]} \in \RR^{n\times m}$.

\paragraph{Auction Mechanism.}
An auction mechanism $(g, p)$ consists of an allocation rule $g: \mathbb R^{n\times m} \times \mathbb R^{n \times d_x} \times \mathbb R^{m \times d_y} \to [0, 1]^{n \times m}$ and a payment rule $p: \mathbb R^{n\times m} \times \mathbb R^{n \times d_x} \times \mathbb R^{m \times d_y} \to \mathbb R_{\ge 0}^{n}$. Given the bids $B$, bidder representations $X$, and item representations $Y$,  $g_{ij}(B,X,Y) \in [0,1]$ computes the probability that item $j$ is allocated to bidder $i$. We require that $\sum_{i=1}^n g_{ij}(B,X,Y) \leq 1$ for any item $j$ to guarantee that no item is allocated more than once. The payment $p_i(B,X,Y) \geq 0$ computes the price that bidder $i$ needs to pay.
Bidders aim to maximize their own utilities.
In the additive valuation setting, 
bidder $i$'s utility is 
$
    u_i(\bmv_i, B;X,Y) \coloneqq \sum_{j=1}^m g_{ij}(B,X,Y) v_{ij} - p_i(B,X,Y),
$
given $\bmv,B,X,Y$. 
Bidders may misreport their valuations to benefit themselves. Such strategic behavior among bidders could make the auction result hard to predict. Therefore, we require the auction mechanism to be \emph{dominant strategy incentive compatible} (DSIC), which means that for each bidder $i\in [n]$, reporting her true valuation is her optimal strategy regardless of how others report.
Formally, let $B_{-i} = (\bmb_1, \dots, \bmb_{i-1}, \bmb_{i+1}, \dots, \bmb_n)$ be the bids except for bidder $i$.
A DSIC mechanism satisfies
\begin{equation}\label{eq:DSIC}\tag{DSIC}
\begin{aligned}
u_i(\bmv_i,(\bmv_i,B_{-i});X,Y)) \ge u_i(\bmv_i,(\bmb_i,B_{-i});X,Y)),\quad\forall i,\bmv_i,B_{-i},X,Y,\bmb_i.
\end{aligned}
\end{equation}
Furthermore, the auction mechanism needs to be \emph{individually rational} (IR), which ensures that truthful bidding results in a non-negative utility for each bidder.  
Formally,
\begin{equation}\label{eq:IR}\tag{IR}
  u_i(\bmv_i,(\bmv_i,B_{-i});X,Y)\geq 0,\quad\forall i,\bmv_i,B_{-i},X,Y.
\end{equation}

\paragraph{Affine Maximizer Auction (AMA).} 
AMA~\citep{AMA} is a generalized version of Vickrey–Clarke–Groves (VCG) auction~\citep{vickrey1961counterspeculation} that is inherently DISC and IR. 
An AMA consists of positive weights $w_i \in \RR_+$ for each bidder and boost variables $\lambda(A) \in \RR$ for each allocation $A \in \Aa$, where $\Aa$ is the \emph{allocation menu}, i.e., the set of all the feasible (possibly random) allocations. 
Given $B,X,Y$, AMA chooses the allocation that maximizes the affine welfare:
\begin{equation}\label{eqn::alloc}\tag{Allocation}
    g(B,X,Y) = A^* \coloneqq \arg\max_{A \in \Aa} \sum_{i = 1}^n w_i b_i(A) + \lambda(A),
\end{equation}
where $b_i(A) = \sum_{j = 1}^m b_{ij} A_{ij}$ in additive valuation setting. 
The payment for bidder $k$ is 
\begin{equation}\label{eqn::payment}\tag{Payment}
    p_k(B,X,Y) = \frac{1}{w_k}\left(\sum_{i \neq k} w_i b_i(A^*_{-k}) + \lambda(A^*_{-k})\right) - \frac{1}{w_k}\left(\sum_{i \neq k} w_i b_i(A^*)  + \lambda(A^*)\right).
\end{equation}
Here we denote by $A^*_{-k} \coloneqq \arg\max_{A \in \Aa} \sum_{i \neq k} w_i b_i(A) + \lambda(A)$ the allocation that maximizes the affine welfare, with bidder $k$'s utility excluded. 
Our definition of AMA differs slightly from previous literature~\citep{AMA}, whose allocation candidates are all the $(n+1)^m$ deterministic solutions.
Instead, we explicitly define the allocation menu so that our definition is more general.
As we will show in \cref{app:proof:thm:IC}, such AMAs are still DSIC and IR.

\section{Methodology}

In this section, we introduce the optimization problem for automated mechanism design and our proposed approach, \name, along with its training procedure.

\subsection{Auction Design as an Optimization Problem}
We begin by parameterizing our auction mechanism as $(g^\theta, p^\theta)$, where $\theta$ represents the neural network weights to be optimized. We denote the parameter space as $\Theta \ni \theta$ and the class of all parameterized mechanisms as $\Mm^\Theta$. The optimal auction design seeks to find the revenue-maximizing mechanism that satisfies both \ref{eq:DSIC} and \ref{eq:IR}:
\begin{equation}\label{eq:opt}\tag{OPT}
\begin{aligned}
\max_{\theta \in \Theta}\quad&\rev(g^\theta, p^\theta) \coloneqq \EE_{(V,X,Y)\sim F_{V,X,Y}} \left[\sum_{i=1}^n p^\theta_i(V,X,Y)\right] \\
\text{s.t.}\quad & (g^\theta, p^\theta)~\mathrm{is~\ref{eq:DSIC}~and~\ref{eq:IR}}
\end{aligned}
\end{equation}
where $\rev(g^\theta, p^\theta)$ is the expected revenue of mechanism $(g^\theta, p^\theta)$.
However, there is no known characterization of a DSIC general multi-item auction~\citep{dutting2019optimal}. 
Thus, we restrict the search space of auctions to affine maximizer auctions (AMAs)~\citep{AMA}. AMAs are inherently DSIC and IR, and can cover a broad class of mechanisms: \citet{lavi2003towards} has shown that every DSIC multi-item auction (where each bidder only cares about what they get and pay) is almost an AMA (with some qualifications).

The AMA parameters consist of positive weights $\bm{w} \in \RR_+^n$ for all bidders, an allocation menu $\Aa$, and boost variables $\bm\lambda \in \RR^{|\Aa|}$ for each allocation in $\Aa$. While the most straightforward way to define $\Aa$ is to include all deterministic allocations, as done in prior work~\citep{likhodedov2004methods, likhodedov2005approximating, sandholm2015automated}, this approach suffers from scalability issues as the number of allocations can be as large as $(n+1)^{m}$. To address this challenge, we propose \name, which predefines the size of $\Aa$ and constructs it along with the other AMA parameters using a permutation-equivariant neural network.

\subsection{\name\ Architecture}\label{sec:model}
\begin{figure}[t]
    \centering
    \includegraphics[width = \textwidth]{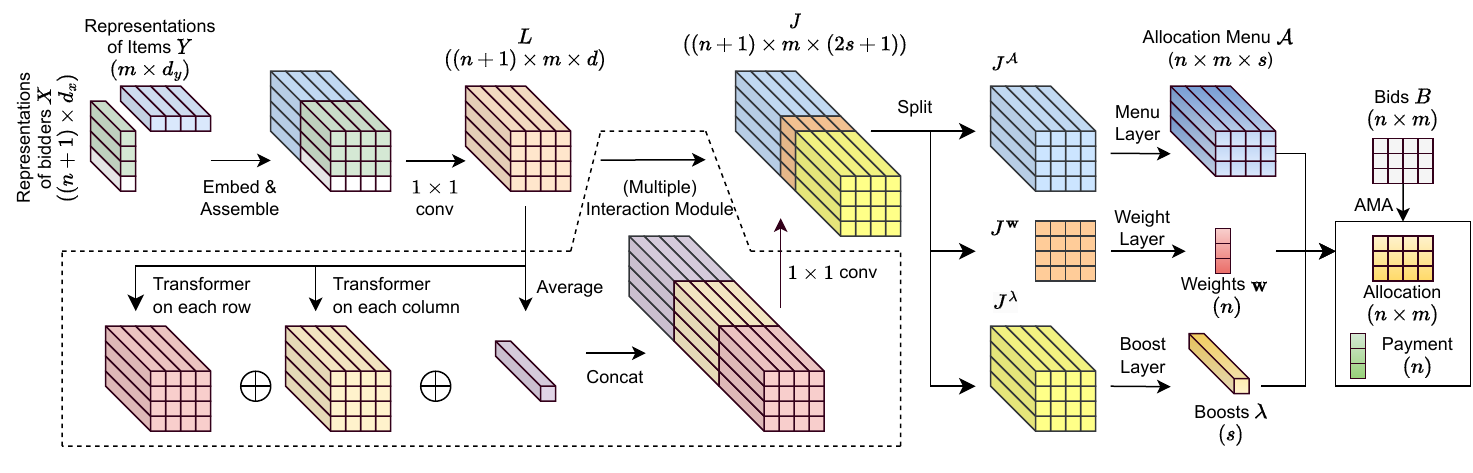}
    \caption{
        A schematic view of \name, which takes the bidder representations $X$ (including the dummy bidder) and item representations $Y$ as inputs.
        These representations are assembled into a tensor $E \in \mathbb{R}^{(n+1) \times m \times (d_x + d_y)}$. 
        Two $1\times 1$ convolution layers are then applied to obtain the tensor $L$.
        Following $L$, multiple transformer-based interaction modules are used to model the mutual interactions among all bidders and items. 
        The output tensor after these modules is denoted as $J$.
        $J$ is further split into three parts: $J^\Aa \in \RR^{(n+1)\times m\times s}$, $J^\bmw\in \RR^{(n+1)\times m}$ and $J^{\bm\lambda}\in \RR^{(n+1)\times m\times s}$.
        These parts correspond to the allocation menu $\Aa$, bidder weights $\bmw$, and boosts $\bm\lambda$, respectively.
        Finally, based on the induced AMA parameters and the submitted bids, the allocation and payment results are computed according to the AMA mechanism. 
    }
    \label{fig::sturcture}
\end{figure}

Denote by $s=|\Aa|$ the predefined size of the allocation menu.
\name\ takes as input the representations of bidders $X$ and items $Y$, and constructs all the $s$ allocations, the bidder weights $\bmw$, and the boost variables $\bm\lambda \in \RR^s$ through a permutation-equivariant neural network architecture as illustrated in \cref{fig::sturcture}. The architecture consists of an encode layer, a menu layer, a weight layer, and a boost layer. The final allocation and payment results can be computed by combining the submitted bids $B$ according to AMA (see \cref{eqn::alloc} and \cref{eqn::payment}). 

We first introduce a dummy bidder $n+1$ to handle cases where no items are allocated to any of the $n$ bidders. 
We set such dummy bidder's representation $\bmx_{n+1}$ as a $d_x$-dimensional vector with all elements set to 1.

\paragraph{Encode Layer.}
The encode layer transforms the initial representations of all bidders (including the dummy bidder) and items into a joint representation that captures their mutual interactions. 
In contextual auctions, their initial representations are the contexts. 
In non-contextual auctions, similar to word embedding~\citep{mikolov2013distributed}, we embed the unique ID of each non-dummy bidder or item into a continuous vector space, and use the embeddings as their initial representations.
Based on that, we construct the initial encoded representations for all pairs of $n+1$ bidders and $m$ items $E \in \RR^{(n+1)\times m \times (d_x + d_y)}$, where 
\begin{equation*}
    E_{ij} = [\bmx_i;\bmy_j] \in \RR^{d_x + d_y}    
\end{equation*}
is the initial encoded representation of bidder $i$ and item $j$.

We further model the mutual interactions of bidders and items by three steps.
Firstly, we capture the inner influence of each bidder-item pair by two $1\times 1$ convolutions with a ReLU activation. 
By doing so, we get 
\begin{equation*}
    L = \mathrm{Conv}_2 \circ \mathrm{ReLU} \circ \mathrm{Conv}_1 \circ E \in \RR^{(n+1)\times m\times d},
\end{equation*}
where $d$ is the dimension of the new latent representation for each bidder-item pair, both $\mathrm{Conv}_1$ and $\mathrm{Conv}_2$ are $1\times1$ convolutions, and $\mathrm{ReLU}(x):=\max(x, 0)$.

Secondly, we model the mutual interactions between all the bidders and items by using the transformer~\citep{vaswani2017attention} based interaction module, similar to \citet{duan2022context}.
Specifically, for each bidder $i$, we model her interactions with all the $m$ items through transformer on the $i$-th row of $L$, and for each item $j$, we model its interactions with all the $n$ bidders through another transformer on the $j$-th column of $L$:
\begin{equation*}
    {I}^{\mathrm{row}}_{i, *} = \mathrm{transformer}({L}_{i, *}) \in \RR^{m\times d_h}, \quad
    {I}^{\mathrm{column}}_{*, j} = \mathrm{transformer}({L}_{*, j}) \in \RR^{(n+1)\times d_h}, 
\end{equation*}
where $d_h$ is the size of the hidden nodes of transformer;
For all the bidders and items, their global representation is obtained by the average of all the representations
\begin{equation*}
     e^{\mathrm{global}} = \frac{1}{(n+1)m}\sum_{i=1}^{n+1}\sum_{j=1}^m {L}_{ij}.
\end{equation*}

Thirdly, we get the unified interacted representation 
\begin{equation*}
    {I}_{ij} := [{I}^{\mathrm{row}}_{ij}; {I}^{\mathrm{column}}_{ij};  e^{\mathrm{global}}] \in \RR^{2d_h + d}
\end{equation*}
by combining all the three representations for bidder $i$ and item $j$.
Two $1\times 1$ convolutions with a ReLU activation are applied on $I$ to encode $I$ into the joint representation 
\begin{equation*}
    I^\mathrm{out} \coloneqq \mathrm{Conv}_4 \circ \mathrm{ReLU} \circ \mathrm{Conv}_3 \circ I \in \RR^{(n+1)\times m\times d_{\mathrm{out}}}.
\end{equation*}
By stacking multiple interaction modules, we can model higher-order interactions among all bidders and items.

For the final interaction module, we set $d_\mathrm{out} = 2s + 1$ and we denote by $J \in \RR^{(n+1)\times m \times (2s+1)}$ its output.
We then partition $J$ into three tensors: $J^\Aa \in \RR^{(n+1)\times m\times s}$, $J^\bmw\in \RR^{(n+1)\times m}$ and $J^{\bm\lambda}\in \RR^{(n+1)\times m\times s}$, and pass them to the following layers.

\paragraph{Menu Layer.}
To construct the allocation menu, we first normalize each column of $J^\Aa$ through softmax function.
Specifically, $\forall j \in [m], k \in [s]$, we denote by $J^\Aa_{*, j, k} \in \RR^{n+1}$ the $j$-th column of allocation option $k$, and obtain its normalized probability $\tilde{J}^\Aa_{*, j, k}$ by 
\begin{equation*}
    \tilde{J}^\Aa_{*, j, k} \coloneqq \mathrm{Softmax}(\tau \cdot {J}^\Aa_{*, j, k})
\end{equation*}
where $\mathrm{Softmax}(\bmx)_i \coloneqq e^{x_i} / (\sum_{k=1}^{n+1} e^{x_k}) \in (0, 1)$ for all $\bmx \in \RR^{n+1}$ is the softmax function, and $\tau > 0$ is the temperature parameter.
Next, we exclude the probabilities associated with the $n+1$-th dummy bidder to obtain the allocation menu:
\begin{equation*}
    \Aa = \tilde{J}^\Aa_{1:n} \in \RR^{n\times m\times s}.
\end{equation*}
Thus, for each allocation $A \in \Aa$ we satisfy $A_{ij} \in [0,1]$ and $\sum_{i=1}^n A_{ij} \le 1$ for any bidder $i$ and any item $j$.
 
\paragraph{Weight Layer.}
In this work, we impose the constraint that each bidder weight lies in $(0, 1]$.
Given $J^\bmw\in \RR^{(n+1)\times m}$, for each bidder $i \le n$ we get her weight by
\begin{equation*}
    w_i = \mathrm{Sigmoid}(\frac{1}{m}\sum_{j=1}^m J^\bmw_{ij}),
\end{equation*}
where $\mathrm{Sigmoid}(x) \coloneqq 1 / (1 + e^{-x}) \in (0,1)$ for all $x \in \RR$ is the sigmoid function.

\paragraph{Boost layer.}
In boost layer, we first average $J^{\bm\lambda}$ across all bidders and items.
We then use a multi-layer perceptron (MLP), which is a fully connected neural network, to get the boost variables $\bm\lambda$. 
Specifically, we compute $\bm\lambda$ by
\begin{equation*}
    \bm{\lambda} = \mathrm{MLP} \left( \sum_{i=1}^{n+1} \sum_{j=1}^m J^{\bm\lambda}_{ij}  \right) \in \RR^s.
\end{equation*}

After we get the output AMA parameters $\Aa, \bmw, \bm\lambda$ from \name, we can compute the allocation result according to \cref{eqn::alloc} and the payment result according to \cref{eqn::payment}. 
The computation of $\Aa, \bmw, \bm\lambda$ only involves the public representations of bidders and items, without access to the submitted bids.
Therefore, the mechanism is DSIC and IR (see \cref{app:proof:thm:IC} for proof):
\begin{restatable}{theorem}{thmIC}\label{thm:IC}
The mechanism induced by \name\ satisfies both \ref{eq:DSIC} and \ref{eq:IR}.
\end{restatable}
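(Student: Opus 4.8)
# Proof Proposal for Theorem~\ref{thm:IC}

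The plan is to show that the mechanism induced by \name\ is a valid AMA in the generalized sense defined in \cref{sec:preli}, and then invoke the fact (to be established in \cref{app:proof:thm:IC}) that every such AMA is \ref{eq:DSIC} and \ref{eq:IR}. The crucial structural observation is that the AMA parameters $\Aa$, $\bmw$, and $\bm\lambda$ output by \name\ depend \emph{only} on the public representations $X$ and $Y$, and never on the submitted bids $B$. This separation is what makes the standard AMA incentive arguments go through, so the first task is to make that independence precise.

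First I would verify that \name\ produces parameters satisfying the formal requirements of an AMA. Concretely: (i) the menu layer guarantees each $A \in \Aa$ satisfies $A_{ij} \in [0,1]$ and $\sum_{i=1}^n A_{ij} \le 1$, because the softmax over the $n+1$ bidders (including the dummy) yields a probability distribution, and discarding the dummy row can only decrease each column sum to at most $1$; (ii) the weight layer outputs $w_i = \mathrm{Sigmoid}(\cdot) \in (0,1] \subseteq \RR_+$, so all weights are strictly positive, which is needed both for the affine welfare to be well defined and for the payment in \cref{eqn::payment} to be divisible by $w_k$; (iii) the boost layer outputs $\bm\lambda \in \RR^s$ with no sign restriction, matching the definition. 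Thus the triple $(\Aa, \bmw, \bm\lambda)$ is a legitimate AMA instance, and the allocation and payment are computed exactly via \cref{eqn::alloc} and \cref{eqn::payment}.

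Next I would argue DSIC. Fix a bidder $i$, her true valuation $\bmv_i$, the others' bids $B_{-i}$, and the representations $X,Y$. Since $(\Aa,\bmw,\bm\lambda)$ are functions of $X,Y$ alone, they are identical whether bidder $i$ reports $\bmv_i$ or any deviation $\bmb_i$. The argument is then the textbook AMA/VCG externality argument: the allocation $A^*$ in \cref{eqn::alloc} maximizes the affine welfare including bidder $i$'s weighted bid, and her utility under truthful reporting equals her weighted valuation plus the boost of the chosen allocation minus the ``welfare of the others'' term, which reduces to $\tfrac{1}{w_i}\bigl(\sum_{k} w_k v_k(A^*) + \lambda(A^*)\bigr) - \tfrac{1}{w_i}\bigl(\sum_{k\ne i} w_k v_k(A^*_{-i}) + \lambda(A^*_{-i})\bigr)$ after substituting the payment. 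Because $A^*$ maximizes the full affine welfare and $A^*_{-i}$ does not depend on bidder $i$'s report, any deviation can only replace $A^*$ by an allocation with weakly smaller affine welfare, so truthful reporting is optimal. For IR, I would plug the truthful report into the utility and show it equals $\tfrac{1}{w_i}\bigl[(\sum_k w_k v_k(A^*) + \lambda(A^*)) - (\sum_{k\ne i} w_k v_k(A^*_{-i}) + \lambda(A^*_{-i}))\bigr]$, which is nonnegative since $A^*$ is the affine-welfare maximizer over all of $\Aa$ while $A^*_{-i}$ is only the maximizer of a subset of the same objective—here the presence of a ``no-allocation'' option (guaranteed by the dummy bidder absorbing unallocated mass) ensures the comparison is valid.

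The main obstacle I anticipate is not the incentive algebra, which is standard, but rigorously justifying that this generalized menu-based AMA—whose menu $\Aa$ is an arbitrary learned set of possibly randomized allocations rather than the full set of $(n+1)^m$ deterministic allocations—still satisfies IR. In particular I must confirm that the menu always contains (or the softmax construction effectively realizes) an allocation that gives bidder $i$ zero value, so that $A^*_{-i}$ can be weakly dominated from bidder $i$'s perspective; the dummy-bidder softmax column is the mechanism that supplies this, and checking that it genuinely yields the required ``opt-out'' guarantee for \emph{every} bidder simultaneously is the delicate point that deserves care. Once that is secured, the DSIC and IR conclusions follow directly from the AMA structure.
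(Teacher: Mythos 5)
Your overall route is the same as the paper's: observe that $(\Aa,\bmw,\bm\lambda)$ depend only on $X,Y$ and not on the bids $B$, then run the standard AMA externality argument. Your DSIC argument is correct and matches the paper's proof in \cref{app:proof:thm:IC} (the two key facts being that $A^*_{-i}$ is unaffected by bidder $i$'s report and that $A^*$ maximizes the full affine welfare over $\Aa$).

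The gap is in your IR step. You correctly reduce the truthful utility to $\frac{1}{w_i}\bigl[\bigl(\sum_k w_k v_k(A^*)+\lambda(A^*)\bigr)-\bigl(\sum_{k\ne i} w_k v_k(A^*_{-i})+\lambda(A^*_{-i})\bigr)\bigr]$, but your justification for its nonnegativity --- that the menu contains an ``opt-out'' allocation giving bidder $i$ zero value, supplied by the dummy-bidder column --- is both unavailable and unnecessary. It is unavailable because the softmax in the menu layer outputs strictly positive entries, so generically no allocation in the learned menu gives any bidder exactly zero value; if IR really hinged on the existence of such an allocation, the proof would fail for \name. It is unnecessary because the correct chain is elementary: $\sum_{k\ne i} w_k v_k(A^*_{-i})+\lambda(A^*_{-i}) \le \sum_{k} w_k v_k(A^*_{-i})+\lambda(A^*_{-i}) \le \sum_k w_k v_k(A^*)+\lambda(A^*)$, where the first inequality uses only $w_i>0$ and $v_i(\cdot)\ge 0$, and the second uses that $A^*_{-i}\in\Aa$ and $A^*$ maximizes the full affine welfare over the same menu $\Aa$. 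This is exactly how the paper closes the IR argument. Relatedly, $A^*_{-i}$ is the maximizer of a \emph{different} objective over the \emph{same} set $\Aa$, not ``the maximizer of a subset of the same objective'' as you wrote. With this correction your proof is complete and coincides with the paper's.
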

Moreover, As \name\ is built using equivariant operators such as transformers and $1\times 1$ convolutions, any permutation of the inputs to \name, including the submitted bids $B$, bidder representations $X$, and item representations $Y$, results in the same permutation of the allocation and payment outcomes. This property is known as permutation equivariance~\citep{rahme2021permutation,duan2023equivariant}:

\begin{definition}[Permutation Equivariance]
    We say $(g, p)$ is permutation equivariant, if for any $B,X,Y$ and for any two permutation matrices $\Pi_n \in \{0, 1\}^{n \times n}$ and $\Pi_m \in \{0,1\}^{m \times m}$, we have $g(\Pi_n B \Pi_m, \Pi_n X, \Pi_m^T Y) = \Pi_n g(B, X, Y)\Pi_m$ and $p(\Pi_n B \Pi_m, \Pi_n X, \Pi_m^T Y) = \Pi_n p(B, X, Y)$.
\end{definition}

Permutation equivariant architectures are widely used in automated auction design~\citep{rahme2021permutation,duan2022context,ivanov2022optimal}.
\citet{qinbenefits} have shown that this property can lead to better generalization ability of the mechanism model.

\subsection{Optimization and Training}

Since the induced AMA is DSIC and IR, we only need to maximize the expected revenue $\rev(g^\theta,p^\theta)$.
To achieve this, following the standard machine learning paradigm~\citep{shalev2014understanding}, we minimize the negative empirical revenue by set the loss function as 
\begin{equation*}
    \ell(\theta, S) \coloneqq \frac{1}{|S|} \sum_{k=1}^{|S|} \sum_{i = 1}^n - p^\theta_i(V^{(k)}, X^{(k)}, Y^{(k)}),
\end{equation*}
where $S$ is the training data set, and $\theta$ contains all the neural network weights in the encode layer and the boost layer.
However, the computation of $\ell(\theta, S)$ involves finding the affine welfare-maximizing allocation scheme $A^*$ (and $A_{-k}^*$), which is non-differentiable. 
To address this challenge, we use the softmax function as an approximation.
During training, we compute an approximate $A^*$ by 
\begin{equation*}
    \widetilde{A^*} \coloneqq \frac{1}{Z}\sum_{A \in \Aa} \exp\left(\tau_A \cdot \left(\sum_{i = 1}^n w_i b_i(A) + \bm\lambda(A)\right)\right) \cdot A,
\end{equation*}
where $Z \coloneqq \sum_{A' \in \Aa}\exp\left(\tau_A \cdot (\sum_{i = 1}^n w_i b_i(A') + \bm\lambda(A'))\right)$ is the normalizer, and $\tau_A$ is the softmax temperature. 
We can control the approximation level of $\widetilde{A^*}$ by tuning $\tau_A$: when $\tau_A \to \infty$, $\widetilde{A^*}$ recover the true $A^*$, and when $\tau_A \to 0$, $\widetilde{A^*}$ tends to a uniform combination of all the allocations in $\Aa$.
The approximation $\widetilde{A^*_{-k}}$ of $A^*_{-k}$ is similar. 
By approximating $A^*$ and $A^*_{-k}$ through the differentiable $\widetilde{A^*}$ and $\widetilde{A^*_{-k}}$, we make it feasible to optimize $\ell(\theta, S)$ through gradient descent.
Notice that in testing, we still follow the standard computation in \cref{eqn::alloc} and \cref{eqn::payment}.

\section{Experiments}

In this section, we present empirical experiments that evaluate the effectiveness of \name\footnote{Our implementation is available at \url{https://github.com/Haoran0301/AMenuNet}.}. 
All experiments are run on a Linux machine with NVIDIA Graphics Processing Unit (GPU) cores. 
Each result is obtained by averaging across $5$ different runs.
In all experiments, the standard deviation of \name\ across different runs is less than $0.01$.

\paragraph{Baseline Methods.} 
We compare \name\ against the following baselines:
\begin{enumerate}[leftmargin=7mm,itemsep=0mm,topsep=0mm]
    \item VCG~\citep{vickrey1961counterspeculation}, which is the most classical special case of AMA;
    \item Item-Myerson, a strong baseline used in \citet{dutting2019optimal}, which independently applies Myerson auction with respect to each item;
    \item Lottery AMA~\citep{curry2022differentiable}, an AMA-based approach that directly sets the allocation menu, bidder weights, and boost variables as all the learnable weights.
    \item RegretNet~\citep{dutting2019optimal}, the pioneer work of applying deep learning in auction design, which adopts fully-connected neural networks to compute auction mechanisms.
    \item CITransNet~\citep{duan2022context}, a RegretNet-like deep learning approach for contextual auctions.
\end{enumerate} 
Note that both RegretNet and CITransNet can only achieve approximate DSIC by adding a regret term in the loss function. 
We train both models with small regret, less than $0.005$. 

\paragraph{Hyperparameters.}
We train the models for a maximum of $8000$ iterations, with $32768$ generated samples per iteration. The batch size is $2048$, 
and we evaluate all models on $100000$ samples.
We set the softmax temperature as $500$ and the learning rate as $3\times 10^{-4}$.
We tune the menu size in $\{32, 64, 128, 256, 512, 1024\}$\footnote{We explore the impact of different menu sizes in \cref{app:experiments}.}.
For the boost layer, we use a two-layer fully connected neural network with ReLU activation.
Given the induced AMA parameters, our implementation of the remaining AMA mechanism is built upon the implementation of \citet{curry2022differentiable}.
Further implementation details can be found in \cref{app:implement}.

\paragraph{Auction Settings.}
\name\ can deal with both contextual auctions and classic Bayesian auctions. 
We construct the following multi-item contextual auctions:
\begin{enumerate}[label=(\Alph*),leftmargin=7mm,itemsep=0mm,topsep=0mm]
    \item \label{settingA} We generate each bidder representations $\bm{x}_i \in \RR^{10}$ and item representations $\bm{y}_j \in \RR^{10}$ independently from a uniform distribution in $[-1, 1]^{10}$ (i.e., $U[-1, 1]^{10}$).
    The valuation $v_{ij}$ is sampled from $U[0, \mathrm{Sigmoid}(\bm{x}_i^T \bm{y}_j)]$.
    This contextual setting is also used in \citet{duan2022context}.
    We choose the number of bidders $n \in \{2, 3\}$ and the number of items $m \in \{2, 5, 10\}$. 
    \item \label{settingB} 
    We set $2$ items, and the representations are generated from the same way as in Setting \ref{settingA}. 
    For valuations, we first generate an auxiliary variable $v'_i$ from $U[0, 1]$ for each bidder $i$, and then we set $v_{i1} = v'_i \cdot \mathrm{Sigmoid}(\bmx_i^T\bmy_1)$ and $v_{i2} = (1 - v'_i) \cdot \mathrm{Sigmoid}(\bmx_i^T\bmy_2)$.
    We do this to make the valuations of the $2$ items highly correlated.
    We choose the number of bidders $n \in \{4, 5, 6, 7\}$.
\end{enumerate}
For classic auctions, we can assign each bidder and item a unique ID and embed them into a multidimensional continuous representation. 
We construct the following classic auction settings:
\begin{enumerate}[resume, label=(\Alph*),leftmargin=7mm,itemsep=0mm,topsep=0mm]
    \item \label{settingC} For all bidders and items, $v_{ij} \sim U[0, 1]$. Such setting is widely evaluated in RegretNet-like approaches~\citep{dutting2019optimal}. We select $n\times m$ (the number of bidders and items) in $\{2\times 5, 3\times 10, 5\times 5\}$. 
    \item \label{settingD} $3$ bidders and $1$ item, with $v_{i1} \sim \mathrm{Exp}(3)$ (i.e. the density function is $f(x) = 1/3 e^{- 1/3 x}$ for all $i \in \{1,2,3\}$). The optimal solution is given by Myerson auction.
    \item \label{settingE} $1$ bidder and $2$ items, with $v_{11} \sim U[4, 7]$ and $v_{12} \sim U[4, 16]$. The optimal auction is given by \citet{daskalakis2015strong}.
    \item \label{settingF} $1$ bidder and $2$ items, where $v_{11}$ has the density function $f(x) = 5 / (1 + x)^6$, and $v_{12}$ has the density function $f(x) = 6 / (1 + x)^7$. The optimal solution is given by \citet{daskalakis2015strong}.
\end{enumerate}

\begin{table}[t]
    \centering
    \caption{
        The experiment results of average revenue.
        For each case, we use the notation $n \times m$ to represent the number of bidders $n$ and the number of items $m$. 
        The regret of both CITransNet and RegretNet is less than $0.005$. 
        The best revenue is highlighted in \textbf{bold}, and the best revenue among all DSIC methods is \underline{underlined}.
    }
    \begin{subtable}[h]{\textwidth}
        \centering
        \subcaption{
            Contextual auctions (Setting \ref{settingA} and \ref{settingB}). 
            We omit Lottery AMA and RegretNet since they are designed for classic auctions.
            `Randomized Allocations' shows the ratio of randomized allocations output by \name. 
        }
        \resizebox{\textwidth}{!}{
        \begin{tabular}{lccccccccccc}
            \toprule
             \multirow{2}{*}{Method} & \multirow{2}{*}{DSIC?} & 2$\times$2 & 2$\times$5 & 2$\times$10 & 3$\times$2 & 3$\times$5 & 3$\times$10 & 4$\times$2 & 5$\times$2 & 6$\times$2 & 7$\times$2 \\
              &  & (A) & (A) & (A) & (A) & (A) & (A) & (B) & (B) & (B) & (B) \\
            \midrule
            \midrule
            CITransNet & No & \textbf{0.4461} & \textbf{1.1770} & \textbf{2.4218} & {0.5576} & \textbf{1.4666} & \textbf{2.9180} & \textbf{0.7227} & \textbf{0.7806} & \textbf{0.8396} & \textbf{0.8997}\\
            \midrule
            VCG & Yes &0.2882  & 0.7221 & 1.4423 & 0.4582 & 1.1457 & 2.2967 & 0.5751 & 0.6638 & 0.7332 & 0.7904 \\
            Item-Myerson & Yes & 0.4265 & 1.0700 & 2.1398 & 0.5590 & 1.3964 & 2.7946 & 0.6584 & 0.7367 & 0.8004 & 0.8535  \\
            \name & Yes & \underline{0.4398} & \underline{1.1539} & \underline{2.3935} & \underline{\textbf{0.5601}} & \underline{1.4458} & \underline{2.9006} & \underline{0.7009} & \underline{0.7548} & \underline{0.8127} & \underline{0.8560}\\
            \midrule 
            \multicolumn{2}{l}{Randomized Allocations} & 1.42\% & 7.37\% & 18.01\% & 3.52\% & 22.25\% & 54.38\% & 12.41\% & 14.81\% & 10.76\% & 17.02\%\\
             \bottomrule
        \end{tabular}
        } 
        \label{tab:performance:contextual}
    \end{subtable}
    \begin{subtable}[h]{0.9\textwidth}
        \centering
        \subcaption{
            Classic auctions (Setting \ref{settingC}-\ref{settingF}).
        }
        \resizebox{0.9\textwidth}{!}{
        \begin{tabular}{lccccccc}
            \toprule
            Method & DSIC? & 2$\times$5(C) & 3$\times$10(C) & 5$\times$5(C) & 3$\times$1(D) & 1$\times$2(E) & 1$\times$2(F) \\ 
            \midrule
            \midrule
            Optimal & - & - & - & - & \underline{2.7490} & \textbf{\underline{9.7810}} & \underline{0.1706} \\
            \midrule
            CITransNet & No     & \textbf{2.3788} & \textbf{5.9191} & \textbf{3.4759} & \textbf{2.7541} & 9.7551 & 0.1691 \\
            RegretNet & No      & {2.3390} & {5.5410} & {3.4260} & 2.7264 & {9.7340} & \textbf{0.1732} \\
            \midrule
            VCG & Yes           & 1.6645 & 5.0002 & 3.3319 & 2.4954 & 0 & 0\\
            Item-Myerson & Yes  & 2.0755 & 5.3141 & 3.3566 & \underline{2.7490} & 8.8359 & 0.1482\\ 
            Lottery AMA & Yes   & 2.2354 & 5.3450 & 2.8527 & 2.7238 & 9.3139 & 0.1544\\ 
            \name & Yes         & \underline{2.2768} & \underline{5.5896} & \underline{3.3916} & {2.7382} & {9.6219} & {0.1701} \\
            \bottomrule
        \end{tabular}
        }
    \end{subtable}    
    \label{tab:performance}
\end{table}

\paragraph{Revenue Experiments.}
The results of revenue experiments are presented in \cref{tab:performance}.
We can see that in settings with unknown optimal solutions, \name\ achieves the highest revenue among all the DSIC approaches.
The comparison between \name\ and VCG reveals that incorporating affine parameters into VCG leads to higher revenue.
Notably, \name\ even surpasses the strong baseline Item-Myerson, which relies on prior distribution knowledge, an over strong assumption that may not hold especially in contextual auctions.
In contrast, \name\ constructs the mechanism solely from sampled data, highlighting its data efficiency.
Furthermore, \name\ demonstrates its ability to approach near-optimal solutions in known settings (Setting \ref{settingD}-\ref{settingF}), even when those solutions are not AMAs themselves. This underscores the representativeness of the AMAs induced by \name.
Remarkably, the comparison between \name\ and Lottery AMA in classic auctions showcases the effectiveness of training a neural network to compute AMA parameters. This suggests that \name\ captures the underlying mutual relations among the allocation menu, bidder weights, and boosts, resulting in a superior mechanism after training.
Lastly, while CITransNet and RegretNet may achieve higher revenue in many cases, it is important to note that they are not DSIC.
Such results indicate that \name's zero regret comes at the expense of revenue.
Compared to RegretNet-based approaches, \name's distinctive strength lies in its inherent capacity to ensure DSIC by design.

\begin{table}[t]
    \centering
    \caption{
        The revenue results of ablation study. 
        For each case, we use the notation $n\times m$ to represent the number of bidders $n$ and the number of items $m$.
        Some results of $\Aa_\mathrm{dtm}$ are intractable because the menu size is too large.
    }
    \vspace{5pt}
    \resizebox{0.8\textwidth}{!}{
    \begin{tabular}{lcccccc}
        \toprule
        & 2$\times$2(A) & 2$\times$10(A) &  3$\times$2(B) &  5$\times$2(B) & 2$\times$3(C) & 3$\times$10(C)\\
        \midrule
        \midrule
        \name & \textbf{0.4293} & 2.3815 & \textbf{0.6197} & \textbf{0.7548} & \textbf{1.3107} & \textbf{5.5896} \\
        With $\bmw=\bm{1}$ & 0.4209 & \textbf{2.3848} & 0.6141 & 0.7010 & 1.2784 & 5.5873 \\
        With $\bm\lambda=\bm{0}$ & 0.3701 & 2.2229 & 0.3892 & 0.3363 & 1.2140 & 5.4145 \\
        With $\Aa_\mathrm{dtm}$ & 0.3633 & - & 0.5945 & 0.7465 & 1.2758 & - \\
        With FCN & 0.4124 & 2.3416 & 0.5720 & 0.6963 & 1.2524 & 5.0262\\
        \bottomrule
    \end{tabular}
    }
    \label{tab:ablation}
\end{table}

\paragraph{Ablation Study.}
We present an ablation study that compares the full \name\ model with the following three ablated versions: \name\ with fixed $\bmw=\bm{1}$, \name\ with fixed $\lambda=\bm{0}$, \name\ with $\Aa_\mathrm{dtm}$ (all the $(n+1)^m$ deterministic allocations), and \name\ with the underlying architecture to be a $4$-layers fully connected neural networks (FCN). 
We set the number of layers in FCN to be $4$, each with $128$ hidden nodes.
The revenue results are presented in \cref{tab:ablation}.
For large values of $n$ and $m$, the experiment result of \name\ with $\Aa_\mathrm{dtm}$ can be intractable due to the large size of the allocation menu ($59049$ for $2\times 10$ and $1048576$ for $3\times 10$).
Such phenomenon indicates that we will face scalability issue if we consider all deterministic allocations.
In contrast, the full model achieves the highest revenue in all cases except for $2 \times 10$(A), where it also performs comparably to the best result.
Notice that, in Setting \ref{settingC}, $\bmw = \bm{1}$ is coincidently the optimal solution due to the symmetry of the setting.
The ablation results underscore the benefits of making the allocation menu, weights, and boost variables learnable, both for effectiveness and scalability. 
Moreover, even with significantly fewer learnable parameters, \name consistently outperforms \name with FCN, further highlighting the revenue advantages of a transformer-based architecture.

\begin{figure}[t]
    \centering
    \begin{subfigure}[b]{0.32\textwidth}
        \centering
        \includegraphics[width=\textwidth]{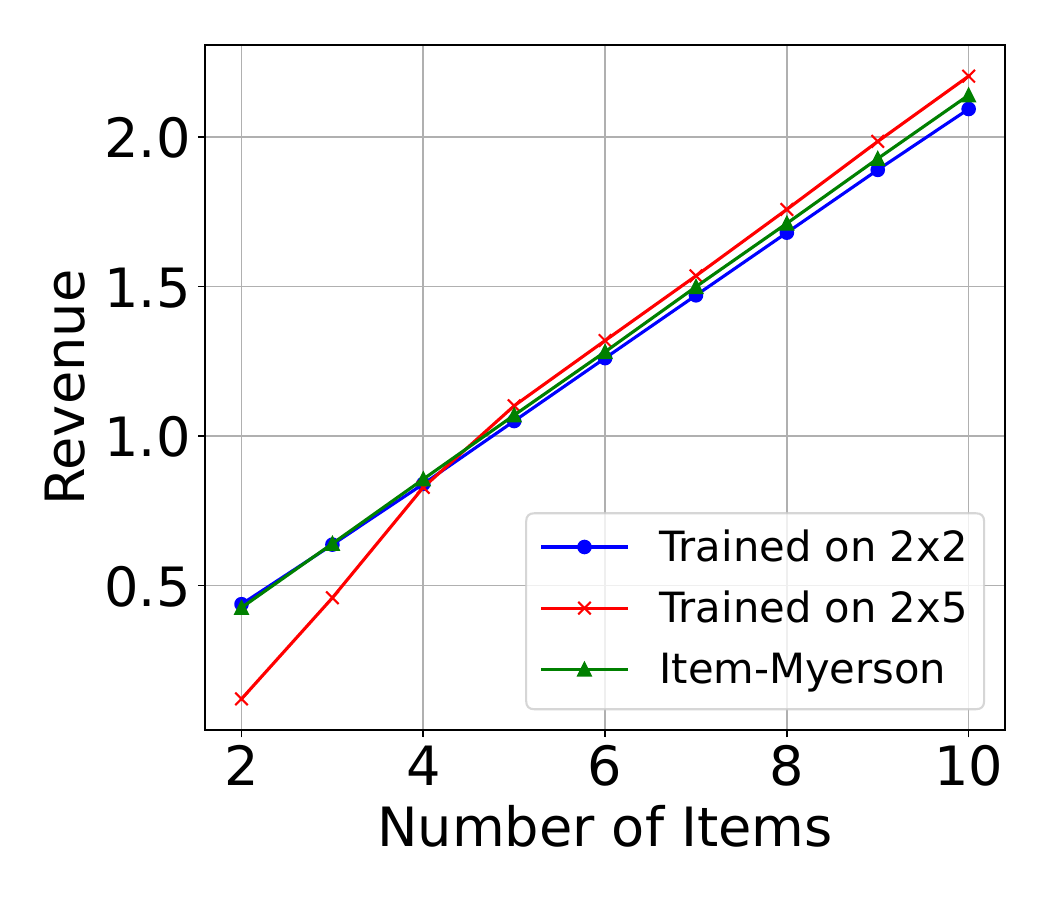}
        \caption{Setting \ref{settingA} with fixed $n=2$.}
        \label{fig::oos:A}
    \end{subfigure}
    \begin{subfigure}[b]{0.32\textwidth}
        \centering
        \includegraphics[width=\textwidth]{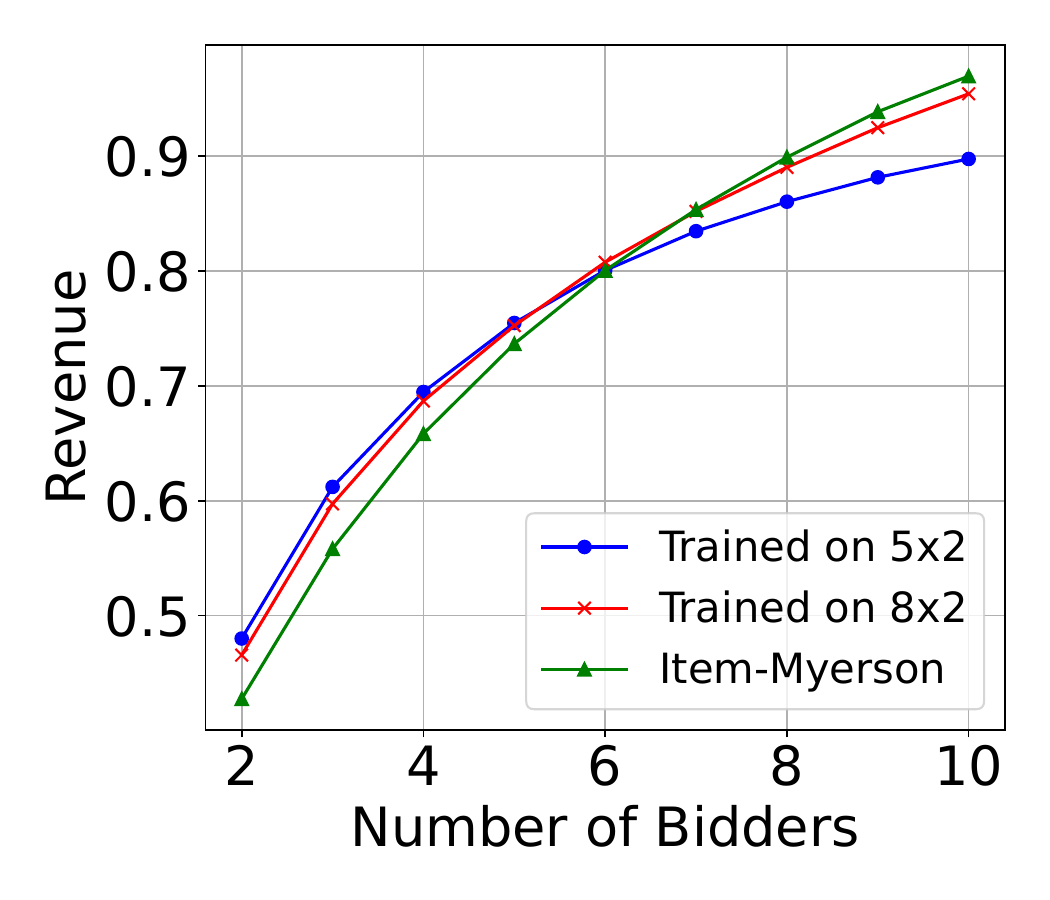}
        \caption{Setting \ref{settingB} with fixed $m=2$.}
        \label{fig::oos:B}
    \end{subfigure}    
    \begin{subfigure}[b]{0.32\textwidth}
        \centering
        \includegraphics[width=\textwidth]{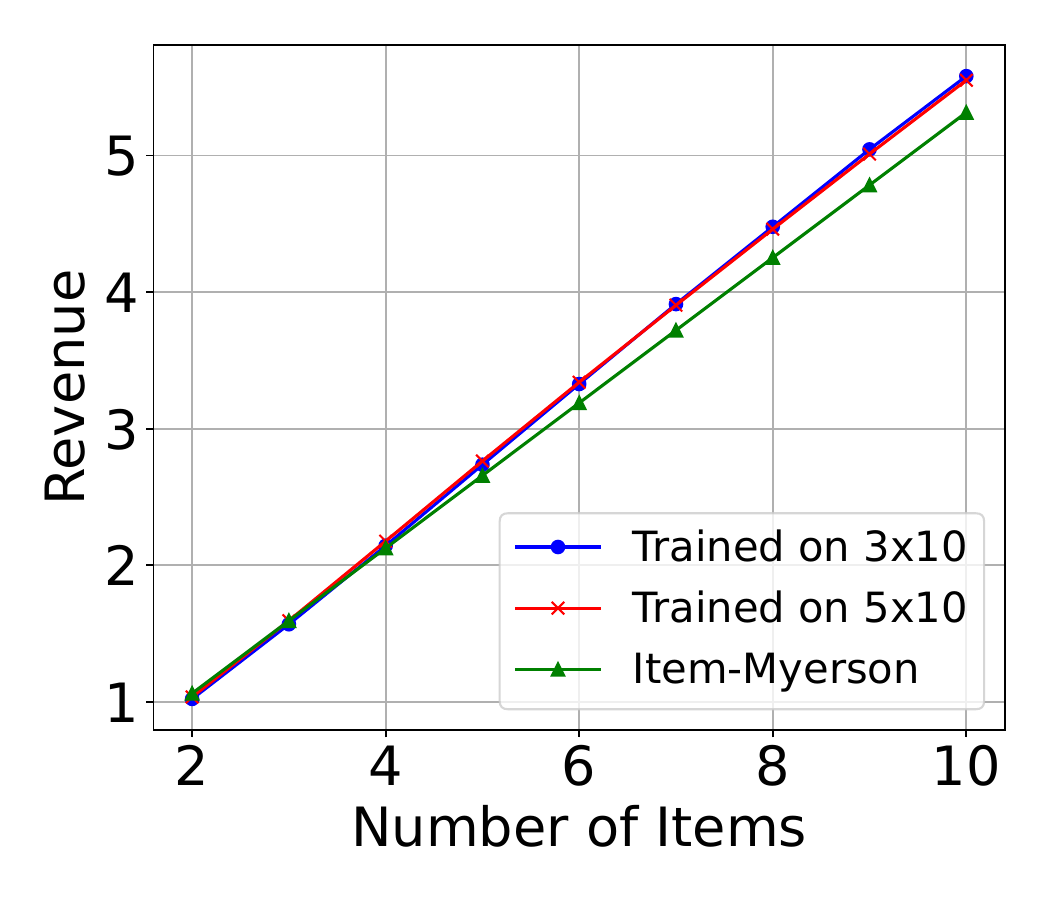}
        \caption{Setting \ref{settingC} with fixed $n=3$.}
        \label{fig::oos:C}
    \end{subfigure}
    \caption{
        Out-of-setting generalization results. 
        We use the notation $n\times m$ to represent the number of bidders $n$ and the number of items $m$.
        We train \name\ and evaluate it on the same auction setting, excepts for the number of bidders or items.
        For detailed numerical results, please refer to \cref{app:experiments}.
    }
    \label{fig::oos}
\end{figure}
\paragraph{Out-of-Setting Generalization.}
The architecture of \name\ is designed to be independent of the number of bidders and items, allowing it to be applied to auctions with varying sizes.
To evaluate such out-of-setting generalizability~\citep{rahme2021permutation}, we conduct experiments whose results are shown in \cref{fig::oos}.
The experimental results demonstrate that the generalized \name\ achieves revenue comparable to that of Item-Myerson, particularly in cases with varying items where \name\ often outperforms Item-Myerson. 
This highlights the strong out-of-setting generalizability of \name.
Furthermore, The fact that \name\ can generalize to larger settings enhances its scalability.

\begin{figure}[t]
    \centering
    \begin{subfigure}[b]{0.32\textwidth}
        \centering
        \includegraphics[width=\textwidth, trim=30mm 0mm 10mm 0,clip]{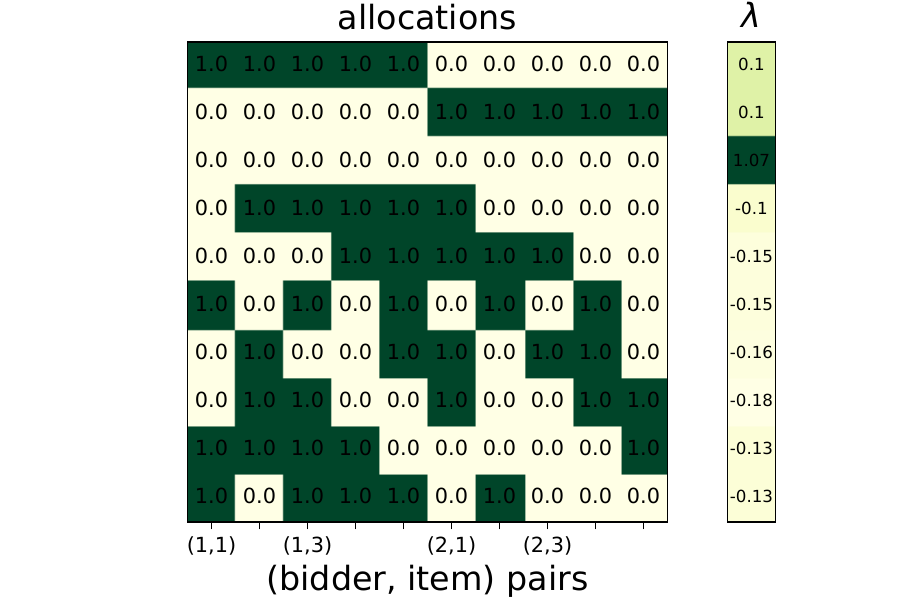}
        \caption{$2\times 5$\ref{settingC} with $s=128$.}
        
    \end{subfigure}
    \begin{subfigure}[b]{0.66\textwidth}
        \centering
        \includegraphics[width=\textwidth,trim=15mm 0mm 8mm 0,clip]{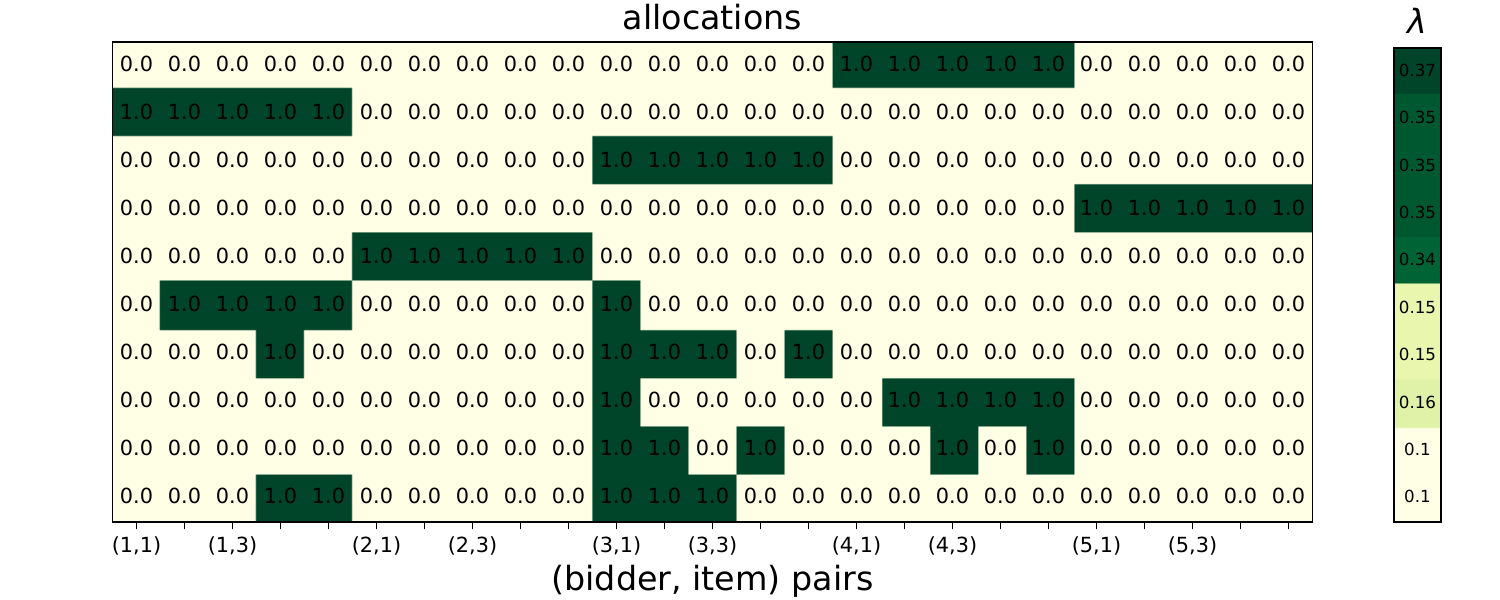}
        \caption{$5\times 5$\ref{settingC} with $s=1024$.}
        
    \end{subfigure}
    \caption{
        The top-$10$ allocations (with respect to winning rate) and the corresponding boosts among $100,000$ test samples. 
    }
    \label{fig:case_study}
\end{figure}
\paragraph{Winning Allocations.}
We conducted experiments to analyze the winning allocations generated by \name. 
In order to assess the proportion of randomized allocations, we recorded the ratio of such allocations in the last row of \cref{tab:performance:contextual}. 
Here, we define an allocation as randomized if it contains element in the range of $[0.01, 0.99]$.
The results indicate that randomized allocations account for a significant proportion, especially in larger auction scales. For instance, in settings such as $2\times 10$(A), $3\times 10$(A), and $7\times 2$(B), the proportion of randomized allocations exceeds $17\%$.
Combining these findings with the results presented in \cref{tab:ablation}, we observe that the introduction of randomized allocations leads to an improvement in the revenue generated by \name.

Additionally, we present the top-$10$ winning allocations based on their winning rates (i.e., allocations that are either $A^*$ or $A^*_{-k}$ for some $k\in[n]$) in \cref{fig:case_study}, specifically for the $2\times 5$\ref{settingC} and $5\times 5$\ref{settingC} settings. 
Notably, the top-$10$ winning allocations tend to be deterministic, suggesting that \name\ is capable of identifying useful deterministic allocations within the entire allocation space.
Furthermore, in the $2\times 5$\ref{settingC} setting, we observed instances where the winning allocation was the empty allocation with a substantial boost. 
This 
can be seen as a reserve price, where allocating nothing becomes the optimal choice when all submitted bids are too small.

\section{Conclusion and Future Work}
In this paper, we introduce \name, a scalable neural network for the DSIC affine maximizer auction (AMA) design.
\name\ constructs AMA parameters, including the allocation menu, bidder weights and boosts, from the public bidder and item representations.
This construction ensures that the resulting mechanism is both dominant strategy compatible (DSIC) and individually rational (IR).
By leveraging the neural network to compute the allocation menu, \name\ offers improved scalability compared to previous AMA approaches. 
The architecture of \name\ is permutation equivariant, allowing it to handle auctions of varying sizes and generalize to larger-scale auctions. 
Such out-of-setting generalizability further enhances the scalability of \name.
Our various experiments demonstrate the effectiveness of \name, including its revenue, scalability, and out-of-setting generalizability.

As for future work, since we train \name\ using an offline learning approach, a potential direction is to explore online learning methods for training \name. Additionally, considering that the allocation menu size still needs to be predefined, it would be interesting to investigate the feasibility of making the allocation menu size learnable as well.

\section*{Acknowledgement}
This work is supported by the National Key R\&D Program of China (2022ZD0114900).
We thank all anonymous reviewers for their helpful feedback.

	\bibliographystyle{plainnat}
	\bibliography{_reference}
	
	\clearpage
\appendix
\section{Proof of \cref{thm:IC}}\label{app:proof:thm:IC}
\thmIC*
\begin{proof}
In \name, the computation of allocation menus $\mathcal A$, bidder weights $\bm{w}$ and allocation boosts $\bm \lambda$ are based on public bidder representation $X$ and item representation $Y$, without the chance to access the bids $B$.
So we only have to prove that when $\mathcal A$, $\bm w$ and $\bm \lambda$ are fixed, the mechanism is both dominant strategy incentive compatibility (DSIC) and individually rational (IR).

\paragraph{The proof of DSIC.}
Denote by $u_k(\bm{v}_k,(\bm{v}_k,B_{-k});X,Y)$ the utility bidder $k$ can get if she truthfully bids and $u_k(\bm{v}_k,(\bm{b}_k,B_{-k});X,Y)$ the utility bidder $k$ can get if she bids $\bm b_k$.
In fact, $u_k(\bm{v}_k,(\bm{v}_k,B_{-k});X,Y)$ and $u_k(\bm{v}_k,(\bm{b}_k,B_{-k});X,Y)$ can also be written as $u_k(\bm{v}_k,(\bm{v}_k,B_{-k});\mathcal A, \bm w, \bm \lambda)$ and $u_k(\bm{v}_k,(\bm{b}_k,B_{-k});\mathcal A, \bm w, \bm \lambda)$, respectively, since the parameters are determined by $X$ and $Y$.

Let $A^*\coloneqq A^*((\bm{v}_k,B_{-k});\mathcal A, \bm w, \bm \lambda)$ and $A^*_{-k} \coloneqq A^*_{-k}((\bm{v}_k,B_{-k});\mathcal A, \bm w, \bm \lambda)$ be the affine-welfare-maximizing allocation and the affine-welfare-maximizing allocation excluding bidder $k$, when bidder $k$ truthfully bids. Formally,
\begin{align*}
    A^* = \arg \max_{A \in \mathcal A} \left( \sum_{i=1}^n w_i b_i(A) + \lambda(A) \right),
    A^*_{-k} = \arg \max_{A \in \mathcal A} \left( \sum_{i\neq k} w_i b_i(A) + \lambda(A) \right).
\end{align*}
Similarly, we define $\widehat{A^*} \coloneqq  A^*((\bm{b}_k,B_{-k});\mathcal A, \bm w, \bm \lambda)$ and $\widehat{A^*_{-k}} \coloneqq  A^*_{-k} ((\bm{b}_k,B_{-k});\mathcal A, \bm w, \bm \lambda)$ for bidder $k$ and her bids $\bm b_k$.
Besides, let $p_k((\bm v_k, B_{-k});\mathcal A, \bm w, \bm \lambda)$ be bidder $k$'s payment defined by AMA. By definition, 
\begin{equation*}
    p_k((v_k, B_{-k});\mathcal A, \bm w, \bm \lambda) = \frac{1}{w_k} \left(\sum_{i \neq k} w_i b_i(A^*_{-k}) + \lambda(A^*_{-k})\right) - \frac{1}{w_k}\left(\sum_{i \neq k} w_i b_i(A^*)  + \lambda(A^*)\right).
\end{equation*}
First notice that $A^*_{-k}$ and $\widehat{A^*_{-k}}$ is the same since bidder $k$'s bid has no influence on others affine welfare. 
Hence,
\begin{align*}
    &u_k(\bm{v}_k,(\bm{v}_k,B_{-k});X,Y) - u_k(\bm{v}_k,(\bm{b}_k,B_{-k});X,Y)\\
    = &u_k(\bm{v}_k,(\bm{v}_k,B_{-k});\mathcal A, \bm w, \bm \lambda) - u_k(\bm{v}_k,(\bm{b}_k,B_{-k});\mathcal A, \bm w, \bm \lambda)\\
    = &v_k(A^*) - p_k((v_k, B_{-k});\mathcal A, \bm w, \bm \lambda)  - (v_k(\widehat{A^*}) - p_k((b_k, B_{-k});\mathcal A, \bm w, \bm \lambda)) \\
    = &v_k(A^*) - \frac{1}{w_k} \left(\sum_{i \neq k} w_i b_i(A^*_{-k}) + \lambda(A^*_{-k})\right) + \frac{1}{w_k}\left(\sum_{i \neq k} w_i b_i(A^*)  + \lambda(A^*)\right) \\
    &-v_k(\widehat{A^*}) + \frac{1}{w_k} \left(\sum_{i \neq k} w_i b_i(\widehat{A^*_{-k}}) + \lambda(\widehat{A^*_{-k}})\right) - \frac{1}{w_k}\left(\sum_{i \neq k} w_i b_i(\widehat{A^*})  + \lambda(\widehat{A^*})\right) \\
    \overset{(1)}{=} &v_k(A^*) + \frac{1}{w_k}\left(\sum_{i \neq k} w_i b_i(A^*)  + \lambda(A^*)\right) 
    - v_k(\widehat{A^*})  - \frac{1}{w_k}\left(\sum_{i \neq k} w_i b_i(\widehat{A^*})  + \lambda(\widehat{A^*})\right) \\
    = & \frac{1}{w_k} \left(\sum_{i=1}^n w_i b_i(A^*) + \lambda(A^*)\right) - \frac{1}{w_k}\left(\sum_{i = 1}^n w_i b_i(\widehat{A^*}) + \lambda(\widehat{A^*})\right) \\
    \overset{(2)}{\geq}& 0,
\end{align*}
where (1) holds by $A^*_{-k}$ and $\widehat{A^*_{-k}}$ is the same, and (2) holds by the definition of $A^*$.

\paragraph{The proof of IR.}
With the same notations above, we aim to prove that for any given representations $X, Y$ and others' bids $B_{-k}$, we have
$
    u_k(\bm{v}_k,(\bm{v}_k,B_{-k});X,Y) \geq 0.
$
We use the same notation as in the proof of DSIC. 
\begin{align*}
    &u_k(\bm{v}_k,(\bm{v}_k,B_{-k});X,Y) \\
    =& u_k(\bm{v}_k,(\bm{v}_k,B_{-k});\mathcal A, \bm w, \bm \lambda)\\ 
    =& v_k(A^*) - p_k((v_k, B_{-k});\mathcal A, \bm w, \bm \lambda) \\
    =& v_k(A^*) - \frac{1}{w_k} \left(\sum_{i \neq k} w_i b_i(A^*_{-k}) + \lambda(A^*_{-k})\right) + \frac{1}{w_k}\left(\sum_{i \neq k} w_i b_i(A^*)  + \lambda(A^*)\right) \\
    = &  \frac{1}{w_k} \left(\sum_{i=1}^n w_i b_i(A^*) + \lambda(A^*)\right) - \frac{1}{w_k}\left( \sum_{i \neq k} w_i b_i(A^*_{-k}) + \lambda(A^*_{-k}) \right) \\
    \overset{(1)}{\geq} & \frac{1}{w_k} \left(\sum_{i=1}^n w_i b_i(A^*) + \lambda(A^*)\right) - \frac{1}{w_k}\left( \sum_{i = 1}^n w_i b_i(A^*_{-k}) + \lambda(A^*_{-k}) \right) \\
    \overset{(2)}{\geq}& 0,
\end{align*}
where (1) holds by $w_k \geq 0$ and $v_k(\cdot) \geq 0$, and (2) holds by the definition of $A^*$.
\end{proof}

\begin{table}[t]
    \centering
    \caption{
        The detailed parameters for all the settings (Setting \ref{settingA}-\ref{settingF}).
    }
    \vspace{2pt}
    \begin{subtable}[b]{\textwidth}
        \centering
        \begin{tabularx}{0.85\textwidth}{lcccccc}
            \toprule
            Parameters & 2$\times$2(A) & 2$\times$5(A) & 2$\times$10(A) & 3$\times$2(A) & 3$\times$5(A) & 3$\times$10(A) \\ 
            \midrule
            \midrule
            Menu Size $s$ & 32 & 64 & 64 & 256 & 256 & 256 \\
            Temperature $\tau$ & 5 & 5 & 5 & 5 & 1 & 1\\
            Iterations & 3000 & 3000 & 3000 &  3000 & 8000 & 8000\\
            \bottomrule
        \end{tabularx}
        \vspace{3pt}
    \end{subtable}
    \begin{subtable}[b]{\textwidth}
        \centering
        \begin{tabularx}{0.85\textwidth}{lcccccc}
            \toprule
            Parameters& 3$\times$2(B) & 4$\times$2(B) & 5$\times$2(B) & 6$\times$2(B) & 7$\times$2(B) & 8$\times$2(B) \\ 
            \midrule
            \midrule
            Menu Size $s$ & 64 & 64 & 64 & 64 & 128 & 256 \\
            Temperature $\tau$ & 3 & 3 & 3 & 3 & 1 & 1\\
            Iterations & 3000 & 3000 & 3000 & 8000 & 8000 & 8000\\
            \bottomrule
        \end{tabularx}
        \vspace{3pt}
    \end{subtable}
    \begin{subtable}[b]{\textwidth}
        \centering
        \begin{tabularx}{0.85\textwidth}{lcccccc}
            \toprule
            Parameters & 2$\times$5(C) & 3$\times$10(C) & 5$\times$5(C) & 3$\times$1(D) & 1$\times$2(E) & 1$\times$2(F) \\ 
            \midrule
            \midrule
            Menu Size $s$ & 128 & 512 & 1024 & 16 & 128 & 40 \\
            Temperature $\tau$ & 10 & 10 & 50 & 10 & 10 & 10\\   
            Iterations & 3000 & 2000 & 2000 & 2000 & 2000 & 2000\\
            \bottomrule
        \end{tabularx}
        \vspace{3pt}
    \end{subtable}
    \label{tab:appendix:implementation}
    \vspace{-10pt}
\end{table}
\section{Further Implementation Details}\label{app:implement}
\paragraph{\name.}
For all settings, the learning rate is $3\times 10^{-4}$. 
We also apply a linear warm-up to the learning rate from $1\times 10^{-8}$ to $3\times 10^{-4}$ during the
first $100$ iterations. 
Totally in \name\ there are two softmax temperatures. 
Denote by $\tau_A$ the one in the softmax function for deciding social-welfare-maximizing allocation and $\tau$ the one used in Menu Layer to generate allocations.
We set $\tau_A$ as $500$ for all settings. 
We train the models for a maximum of $8000$ iterations with $2^{15} = 32784$ new samples each iteration. 
For those we train more than $3000$ iterations, we manually reduce the learning rate to $5\times 10^{-5}$ after $3000$ iterations.
We test the model on a fixed test set with $100000$ samples. 
The menu size and $\tau$ varies in different settings, and we present these numbers in \ref{tab:appendix:implementation}. 
For classic auctions, including Setting \ref{settingC}-\ref{settingF}, we embed each bidder and item ID into $16$-dimension space. 
For encode layer, the output channel of the first $1\times 1$ convolution in both the input layer and interaction modules are set to $64$. 
The number of interaction modules is $3$ or $5$, and in each interaction module we adopt transformer with $4$ heads and $64$ hidden nodes.
The boost layer is a two-layer fully-connected network with ReLU activation. 
To enable the out-of-setting generalization ability, we set both its channel and the hidden size as menu size. 

\paragraph{Item-Myerson.} 
There are no parameters in Myerson auction and VCG auction. We just implement the classic form of these two mechanisms. 
For Item-Myerson, we first compute the virtual value $\tilde{v_i}(v_i)$ according to
$
    \tilde{v_i}(v_i) = v_i - \frac{1 - F_i(v_i)}{f_i(v_i)},
$
where $F_i$ and $f_i$ are the cdf and pdf function of $i$'s value distribution. Then allocate the item to the bidder with the highest virtual value if it is greater than 0. The payment by the winning bidder is equal to the minimal bid that she would have had to make in order to win.

\paragraph{VCG.}
VCG auction is a special case of AMA. It is equivalent to AMA by setting $\mathcal A$ as all deterministic allocations, $\bmw$ as 1 and $\bm \lambda$ as 0.

\paragraph{Lottery AMA.} 
We use the implementation of \citet{curry2022differentiable}.
The hyperparameters and the results are mainly consistent with that reported in \cite{curry2022differentiable}. We train all models for $10000$ steps with $2^{15}$ newly generated samples. The learning rate is $0.001$ and the softmax temperature is $100$. We set menu size as $\{256, 4096, 2048, 16, 40, 40\}$ for the six experiments in Setting \ref{settingC}-\ref{settingF} respectively.

\paragraph{CITransNet and RegretNet.}
We follow the same hyperparameters in \cite{duan2022context} and \cite{dutting2019optimal}.

\section{Further Experiments Results}\label{app:experiments}

\paragraph{Impact of Menu Size.} 
\begin{figure}[t]
    \centering
    \includegraphics[width = 0.9\linewidth]{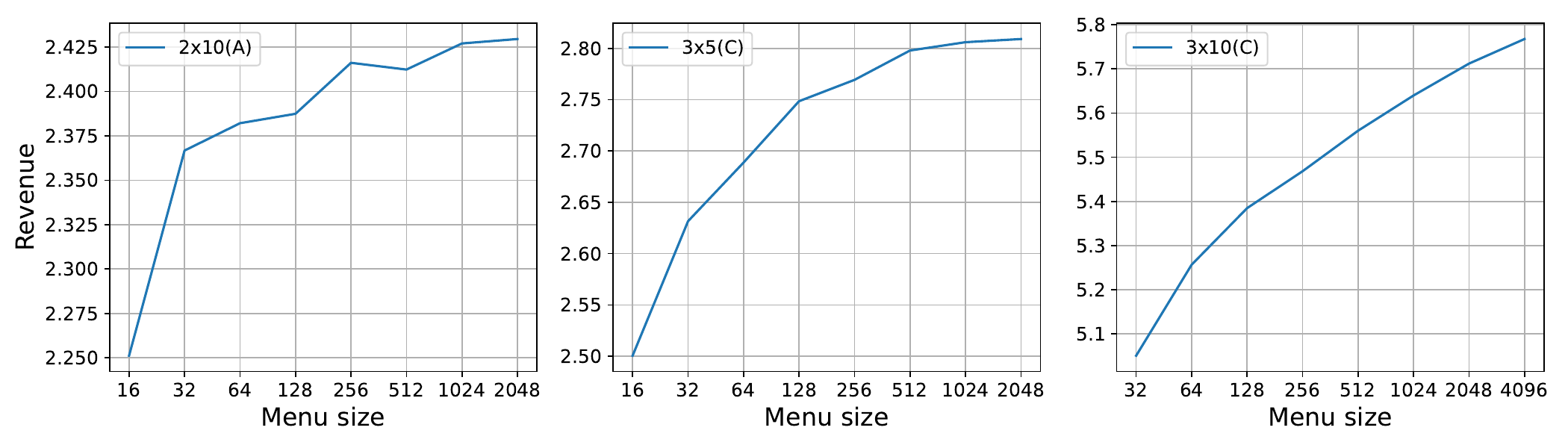}
    \caption{Revenue results of \name under different menu sizes.}
    \label{fig:menusize}
\end{figure}
In \cref{fig:menusize}, we present the experimental results of \name under various menu sizes. 
We conduct such experiment in setting $2\times 10$\ref{settingA}, $3\times 5$\ref{settingC} and $3\times 10$\ref{settingC}.
Generally, a larger menu, while entailing higher computational costs, can lead to increased revenue.

\begin{table}[t]
    \centering
    \caption{
        The experiment results of average revenue in out-of-setting generalization experiments.
    }
    \vspace{2pt}
    \begin{subtable}[b]{\textwidth}
        \centering
        \resizebox{\textwidth}{!}{
        \begin{tabular}{lcccccccccc}
            \toprule
             Method  & 2$\times$2(A) & 2$\times$3(A) & 2$\times$4(A) & 2$\times$5(A) & 2$\times$6(A) & 2$\times$7(A) & 2$\times$8(A) & 2$\times$9(A) & 2$\times$10(A) \\
            \midrule
            \midrule
            Item-Myerson & 0.4265 & $\textbf{0.6405}$ & $\textbf{0.8559}$ & 1.0700 & 1.2861 & 1.4993 & 1.7111 & 1.9227 & 2.1398  \\
            Trained on 2$\times$2 & $\textbf{0.4380}$ & 0.6371&0.8401& 1.0503& 1.2606& 1.4706& 1.6801& 1.8900& 2.0935 \\
            Trained on 2$\times$5 & 0.1212& 0.4593& 0.8286& $\textbf{1.1013}$& $\textbf{1.3193}$& $\textbf{1.5357}$& $\textbf{1.7579}$& $\textbf{1.9850}$& $\textbf{2.2034}$ \\
            \bottomrule
        \end{tabular}
        } 
        \vspace{3pt}
    \end{subtable}
    \begin{subtable}[b]{\textwidth}
        \centering
        \resizebox{\textwidth}{!}{
        \begin{tabular}{lcccccccccc}
            \toprule
             Method & 2$\times$2(B) & 3$\times$2(B) & 4$\times$2(B) & 5$\times$2(B) & 6$\times$2(B) & 7$\times$2(B) & 8$\times$2(B) & 9$\times$2(B) & 10$\times$2(B) \\
            \midrule
            \midrule
            Item-Myerson & 0.4274 & 0.5580 & 0.6584 & 0.7366 & 0.8004 & $\textbf{0.8535}$ & $\textbf{0.8990}$ & $\textbf{0.9386}$ & $\textbf{0.9696}$  \\
            Trained on 5$\times$2 & $\textbf{0.4800}$& $\textbf{0.6120}$& $\textbf{0.6946}$& $\textbf{0.7547}$& 0.8005& 0.8346& 0.8603& 0.8815& 0.8975 \\
            Trained on 8$\times$2 & 0.4656& 0.5972& 0.6868& 0.7524& $\textbf{0.8076}$& 0.8519& 0.8902& 0.9247& 0.9542 \\
            \bottomrule
        \end{tabular}
        } 
        \vspace{3pt}
    \end{subtable}
    \begin{subtable}[b]{\textwidth}
        \centering
        \resizebox{\textwidth}{!}{
        \begin{tabular}{lcccccccccc}
            \toprule
             Method  & 3$\times$2(C) & 3$\times$3(C) & 3$\times$4(C) & 3$\times$5(C) & 3$\times$6(C) & 3$\times$7(C) & 3$\times$8(C) & 3$\times$9(C) & 3$\times$10(C) \\
            \midrule
            \midrule
            Item-Myerson & $\textbf{1.0628}$ & 1.5941 &2.1256 & 2.6570 &3.1883 &3.7197&4.2512 & 4.7825 &5.3140  \\
            Trained on 3$\times$10 & 1.0221& 1.5686& 2.1433& 2.7365& 3.3267& $\textbf{3.9116}$& $\textbf{4.4776}$& $\textbf{5.0440}$& $\textbf{5.5799}$ \\
            Trained on 5$\times$10 & 1.0366& $\textbf{1.5947}$& $\textbf{2.1766}$& $\textbf{2.7630}$& $\textbf{3.3398}$& 3.9034& 4.4598& 5.0098& 5.5498\\
            \bottomrule
        \end{tabular}
        } 
        \vspace{3pt}
    \end{subtable}
    \vspace{-20pt}
    \label{tab:oos_performance}
\end{table}
\paragraph{Out-of-Setting Generalization.}
In \cref{tab:oos_performance}, we present the detailed numerical results of our out-of-setting generalization experiments.

\begin{figure}[t]
    \centering
    \includegraphics[width = 0.9\textwidth]{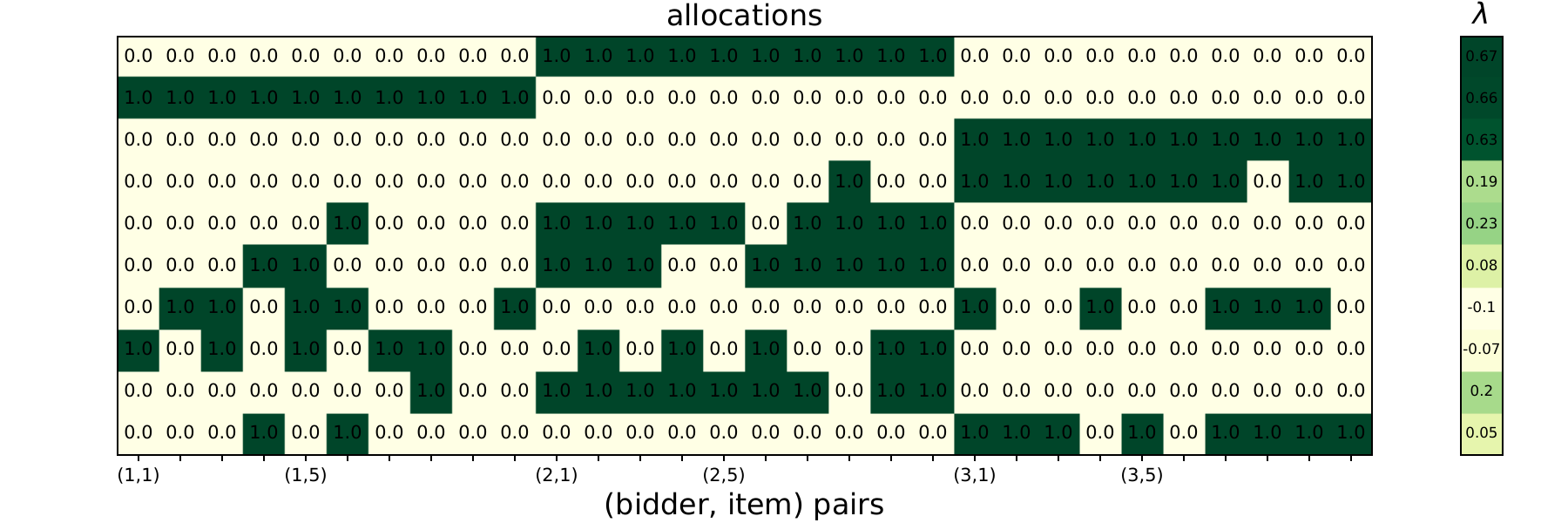}    
    \caption{The top-10 allocations (with respect to winning rate) and the corresponding boosts among $100,000$ test samples in $3\times 10$\ref{settingC}, with menu size $s=1024$.}
    \label{fig::case_study_3x10}
    \vspace{-15pt}
\end{figure}
\paragraph{Winning Allocations.}
We further present the top-$10$ winning allocations of $3\times 10$\ref{settingC} in \cref{fig::case_study_3x10}.
The experimental results maintain the same phenomenon: the top-10 winning allocations are all deterministic.

\end{document}